\newcommand{\prolog}[1]{\mintinline[breaklines]{prolog}{#1}}
\newtheorem{definition}{Definition}
\newtheorem{example}{Example}
\newtheorem{theorem}{Theorem}
\title{Can AI expose tax loopholes? Towards a new generation of legal policy assistants}
\author{ \href{https://orcid.org/0000-0001-6378-5886}{\includegraphics[scale=0.06]{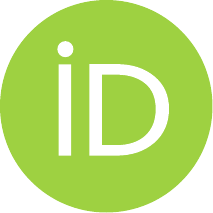}\hspace{1mm}Peter Fratrič} \\
	Department of Law and Tax\\
	HEC Paris, France\\
	\texttt{fratric@hec.fr} \\
	\And
	\href{https://orcid.org/0000-0002-0844-1391}{\includegraphics[scale=0.06]{orcid.pdf}\hspace{1mm}Nils Holzenberger} \\
	Data, Information, Graphs Lab\\
	Télécom Paris, Institut Polytechnique de Paris, France\\
	\texttt{nils.holzenberger@telecom-paris.fr} \\
	\AND
    \href{https://orcid.org/0000-0002-2841-2563}{\includegraphics[scale=0.06]{orcid.pdf}\hspace{1mm}David Restrepo Amariles} \\
	Department of Law and Tax\\
	HEC Paris, France\\
	\texttt{restrepo-amariles@hec.fr} \\
}
\begin{document}
\maketitle

\begin{abstract}
    The legislative process is the backbone of a state built on solid institutions. Yet, due to the complexity of laws~--- particularly tax law~--- policies may lead to inequality and social tensions. In this study, we introduce a novel prototype system designed to address the issues of tax loopholes and tax avoidance. Our hybrid solution integrates a natural language interface with a domain-specific language tailored for planning. We demonstrate on a case study how tax loopholes and avoidance schemes can be exposed. We conclude that our prototype can help enhance social welfare by systematically identifying and addressing tax gaps stemming from loopholes.
\end{abstract}

\section{Introduction}\label{sec:introduction}

Each year, billions of dollars in uncollected income and wealth taxes contribute to what is commonly referred to as the tax gap. For example, the global loss of tax revenue due to profit shifting remains to be about $10\%$ of corporate tax revenue collected \cite{alstadsaeter-et-all:report}. The lost revenue could otherwise be harnessed to alleviate poverty, fund critical public services such as education and healthcare, combat climate change, or promote economic investment; this is particularly true for low-income countries.\footnote{See United Nations' \textit{2024 Trade and development report} \url{https://unctad.org/system/files/official-document/tdr2024ch5_en.pdf}} This difference comes from two primary sources: policy design flaws, known as the \textit{policy gap}, and non-compliance due to tax evasion and avoidance, referred to as the \textit{compliance gap}. Notably,
both gaps
undermine public trust in societal institutions, which are essential to fostering sustainable economic development and progress \cite{acemoglu:institutional}.
Analyses point out that inefficient collection of taxes is the main driving force behind wealth inequality globally \cite{zucman-et-all:world,zucman:global}. 

\begin{figure}[h!]
    \centering
    \includegraphics[width=0.55\textwidth]{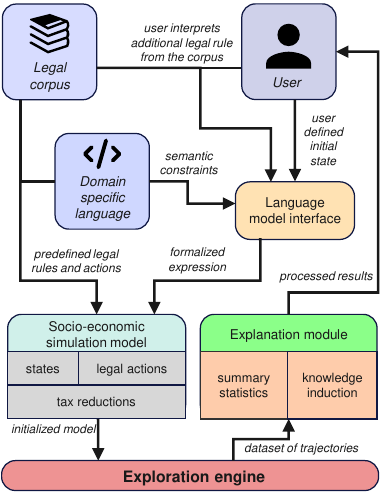}
    \caption{A subset of laws from the legal corpus are translated in a domain specific language (DSL), to define states, legal actions, and tax reductions in a socio-economic model. The user can define the initial state or interpret a new law from the corpus via a natural language interface constrained to produce DSL-compatible expressions. The exploration engine samples tax plans, conceptualized in the explanation module, and passed to the user.}
    \label{fig:figure1}
\end{figure}

In this study, we focus on tax loopholes, which originate in the policy gap (flawed laws), and aggravate the compliance gap, by generating opportunities to engage in tax avoidance ~\cite{blair-et-all:shelter}. 
Using concepts from agent-based modeling, we propose a prototype system described in Figure~\ref{fig:figure1}, and address the issues of tax policy conceptualized as research questions:
\begin{itemize}
    \item [RQ1] \textit{Availability of tax planning:} Can AI make tax planning broadly available?
    \item [RQ2] \textit{Democratization of tax law:} Can AI identify tax loopholes, allowing for a wider societal law evaluation?
    \item [RQ3] \textit{Improving social welfare:} Can AI assist policy design, thus improving aggregate measures of social welfare?
\end{itemize}

Loopholes often emerge due to misaligned regulations at both national and international levels, or through the misuse of well-intentioned tax incentives. We choose to illustrate our prototype on a notoriously known tax avoidance scheme, described in Example~\ref{ex:main}:
\begin{example}\label{ex:main}
    Consider the task of incorporating the child companies of a multinational company in various countries, to provide services derived from an intellectual property (IP). Naively, one could simply start a child company in each country, and sub-license IP rights to its subsidiary. However, roughly from 2000 to 2020, a notorious tax avoidance scheme called \textbf{Double Irish with the Dutch Sandwich} was used to avoid billions of dollars in taxes by utilizing two loopholes (as indicated in its name) and several international agreements.
\end{example}


\subsection{Related work}

Traditionally, tax policy was designed within the framework of optimal tax design \cite{mirrlees:taxdesign,casamatta:tax}. These principles were extended by the agent-based modeling paradigm, placing its focus on individual behavioral aspects rather than on analytic analysis. Several simulation models were developed (see e.g. \cite{garrido-et-al:tax}), with AI-economists being developed as an illustration of how a deep learning agent can be used to analyze and decide optimal tax policy \cite{zheng-et-al:aieconomicst}. These studies, while highly sophisticated, can be regarded as naive simplifications of the real world, because they exclude the possibility of clever non-compliance strategies. 
Tax gaps are often perceived as an issue of operationalization rather than of policy decision.
Several studies aimed to develop a model to include tax non-compliance, modeling it as, e.g., an external factor \cite{sandmo:ext}, a social network phenomenon \cite{korobow-et-al:tax}, or a matter of opportunity \cite{noguera-et-al:tax}). However, these studies address only the sociological aspect of tax avoidance and not the legal aspect that could prevent it. 

Although several detection models were developed \cite{zheng-et-al:attenet,cao-et-al:rrpu,wu-et-al:noveltax,polovnikov-et-al:ownership}, even state-of-the-art non-compliance detection methods continue to struggle with adversarial dynamics of continuously evolving non-compliance, representing a major challenge \cite{zang-et-al:dontignore,zhang-et-all:onlinecon,Ma-et-all:gnn}. Moreover, as an application of the principle of due process, any detection leading to prosecution or restrictions in a tax investigation must be conducted in a fair and explainable manner, further complicating efforts to combat tax non-compliance with artificial intelligence \cite{mehdiyev-et-al:xai}.
Only a handful of studies focused on pointing out possible loopholes in the tax law, that is, aiming to restrict non-compliant behavior \textit{ex ante} rather than \textit{ex post} \cite{khlif-et-all:tax}. The seminal work of \cite{hemberg-et-all:evo} shows that illicit tax avoidance schemes can exhibit inherent evolutionary structure. Our work integrates these strands of literature by utilizing simulations of potentially non-compliant behavior to guide the development of more effective tax policies.

In recent years, Large Language Models (LLMs) have shown they certainly have their place in future tools for policymaking and legal analysis, with impactful results demonstrated in legal question-answering systems \cite{huang-et-all:aila,bhambhoria-et-all:interpretable}, legal judgment prediction \cite{feng-et-all:legal,gan-et-all:judgment}, and information retrieval \cite{shao-et-all:bertpli}, \textit{inter alia}. A novel approach is taken in \cite{padhye:tax}, where the issue of tax compliance is viewed as a software testing problem, and by using a large language model one can point out possible inconsistencies in tax law. While these advancements show promise, several studies have highlighted significant limitations. In the legal domain, there is evidence that LLMs are insufficient in performing basic reasoning with legal statutes \cite{blairstanek23can}. 

Efforts to overcome these limitations in reasoning and planning~\cite{zhou24isrllm} have involved neuro-symbolic architectures, where LLMs are seen primarily as an interface between natural language and symbolic reasoning systems.
LLMs can be made to use APIs to query symbolic tools~\cite{schick23toolformer}, which can support both reasoning and planning. Using this approach, an LLM can be augmented with a planning algorithm, by letting an LLM generate a model of the world, further processed by a planner~\cite{guan23leveraging}. In this sense, generating a world model is similar to code generation~\cite{silver24generalized}, a task LLMs are generally well-suited for~\cite{xu22systematic}.
To connect LLMs to exact reasoning, the LINC model translates natural language into first-order logic and executes a theorem prover \cite{olausson-et-all:linc}. We draw inspiration from this approach and translate legal language into a domain-specific language, to serve as input to a formal planning and reasoning tool.


\subsection{Outline and contributions}

 
Section~\ref{sec:model} is focused on developing a domain specific simulation model with a language model input interface. This section contributes to a growing research in hybrid neuro-symbolic systems capable of planning. In Section~\ref{sec:simulation}, we show how the symbolic component of the developed model can be used to generate various incorporation plans, and point out certain tax loopholes that allow for a high amount of avoided taxes. The discussion on social implications of our technology is continued in Section~\ref{sec:policy}, where we briefly enter the framework of \textit{optimal tax design}, and show, both theoretically and in practice, how results of this study can be used to increase the value of a social welfare function. These results establish a connection between tax laws, compliance, and tax policy, addressing fundamental issues of optimal tax policy design.
We further discuss these implications along with future work in Section~\ref{sec:discussion}, and conclude in Section~\ref{sec:conclusion}.

\begin{table*}[bp]
\centering
\begin{tabular}{p{2.25cm}p{13.750cm}}
\toprule
reference code & associated legal documents \\
\midrule
EU-incorp  & Articles 49, 50(1) and (2)(g), and 54, second paragraph, of the Treaty on the Functioning of the European Union (TFEU) \\
USA-incorp  & Articles of Incorporation/Organization based on US federal law \\
BRM-incorp  & Companies Act 1981 \\
I-incorp & derived from case law: De Beers Consolidated Mines Ltd. v. Howe (1906) and Todd v. Egyptian Delta Land and Investment Co. Ltd. (1928) \\
transfer & common law understanding of IP transfer \\
license & common law understanding of IP licensing \\
2003/49/EC & EU Interest and Royalties Directive (2003/49/EC) \\
DCITA1969 & Dutch Corporate Income Tax Act (Wet op de vennootschapsbelasting 1969) \\
IRC-Sec162 & Internal Revenue Code (IRC) Section 162: Business Expense Deduction \\
2003/49/EC & EU Interest and Royalties Directive (2003/49/EC) \\
A8cNLctl1969 & Article 8c Dutch corporate tax law 1969 (Dutch: Wet op de vennootschapsbelasting 1969) \\
USA-wte & various US Treasury regulations and IRS provisions \\
check-the-box & IRC 7701, also known as Check-the-Box or CTB regulations, effective as of January 1, 1997 \\
\bottomrule
\end{tabular}
\caption{Legal corpus consisting of a selection of laws used as a reference for formalization of legal rules.}
\label{tab:appLaw}
\end{table*}

\section{Building a model: Ambiguity as a feature}\label{sec:model}

Here, we view tax calculations not as a precise formula to be identified and applied, but rather as a set of possible calculations differing only in the degree of their legal validity, as assessed by the legal community. To account for this ambiguity, it is necessary to design a system that allows to dynamically add new rules. For this reason, we build our model as a combination of static and dynamic formal concepts, the latter being defined by the user using a language model interface.

\subsection{Model and initial state}

In order to model the incorporation task, as outlined in Example~\ref{ex:main}, we define a transition system ${(S,A,\rightarrow,s_0)}$, where $S$ is the set of states, $A$ is the set of actions, $\rightarrow$ is the transition function, and $s_0$ the initial state.

\paragraph{States} We consider six state facts defining the state of incorporation of ${N-1}$ companies into ${M}$ countries, with the parent company defined in the initial state $s_0$. These state facts are:
\prolog{based(Company,Country)} defining operational country of a company, \prolog{managed(Company,Country)} if the country is managed from a different country, \prolog{isChildOf(Child,Parent)} defining corporate child-parent relationships, \prolog{ownsIP(Company,IP)} for countries that own IP, and \prolog{rentsIP(Owner,Renter,IP)} to define IP-licensing relationships between two companies. The initial state can be defined by the user, as illustrated in Figure~\ref{fig:init}. 

\begin{figure}
    \centering
    \includegraphics[width=0.7\linewidth]{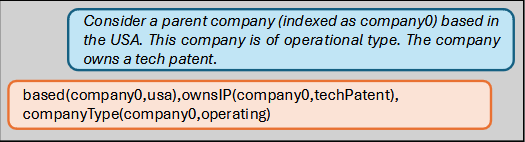}
    \caption{Illustration of state initialization by the user. Probability of obtaining the correct response can be found in Figure~\ref{fig:llmStats} for various language models.}
    \label{fig:init}
\end{figure}

\paragraph{Actions and transitions} State facts can be modified by actions. We consider three canonical actions, namely: \prolog{addChild} that instantiates a child company for a parent company, \prolog{rentIP} that instantiates sub-licensing contract between IP owner and IP licensee, and \prolog{transferIP} that instantiates the change of IP ownership contract. All these actions result in a modification of the state facts. Each action is actionable if its preconditions are satisfied. These are defined by corporate law for each jurisdiction. These laws are formalized into Prolog rules, based on the interpretation of a legal provision and labeled with a reference code \prolog{LegalRef}. 
These references are very important, because different jurisdictions or legal traditions can have a different understanding of a given action, which also allows us to track which laws were utilized in the process. For example, Europe and Great Britain have different incorporation traditions. While the location of a company is defined by its operational residence in the European tradition, it matters more from where the company is managed in the British tradition. This means the \prolog{addChild} action has different consequences for a different \prolog{LegalRef}. For a full list of legal references, see Table~\ref{tab:appLaw}.

\begin{figure}[t]
    \centering
    \includegraphics[width=0.7\linewidth]{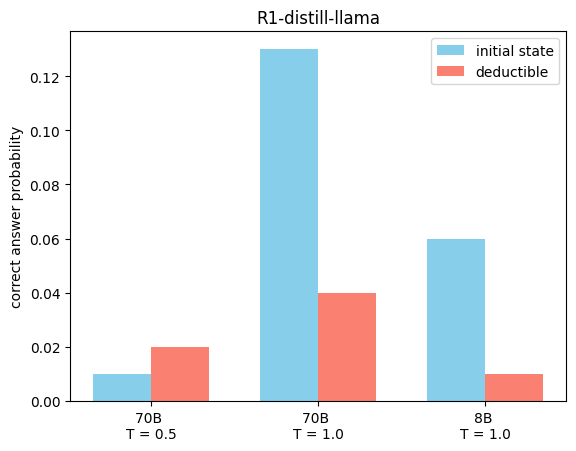}
    \caption{Simulation results of LLM sampling for the initial state prompt formalization (left bar) and deductible prompt formalization (right bar). Temperature value of the sampler was selected by grid search for values $[0.01, 0.02, 0.05, 0.1 , 0.2 , 0.5 , 1.0  ]$. For other values not included in the plot, no correct answer was found.}
    \label{fig:llmStats}
\end{figure}

\subsection{Economic context}

Each company can engage in commercial activity using its IP, or generate licensing revenues by instantiating a contract with another company. Both of these activities generate transactions of the form: 
\begin{small}
\begin{minted}{prolog}
transaction(Id,Time,Sender,Receiver,Amount)    
\end{minted}
\end{small}

\paragraph{Licensing contracts} Sub-licensing is modeled by a formal contractual agreement between two companies. The licensor can charge $90 \%$ of revenue from the licensee, including revenue from further sub-licensing with the same percentage; defined recursively. Transfer of IP generates a one-time transfer of funds. Since all companies are assumed to be controlled by the same entity, we assume that this transfer can occur between the parent company and any child company. To prevent cyclic transfer, we allow only one transfer. 

\paragraph{Commercial revenue} We assume only minimal economic context by defining a fixed revenue from commercial activity in each country. The countries we consider are:

\begin{table}[h!]
    \centering
    \begin{tabular}{c|cccc}
        \textbf{country} & U.S.A & Germany & Netherlands & Ireland \\ \hline
        \textbf{revenue} & \$700mil. & \$300mil. & \$100mil. & \$30mil. \\ 
    \end{tabular}
\end{table}
The necessary condition for revenue to be generated is for a company to own or license the IP in a country where the company is operationally present. We say that the incorporation process is \textit{multi-nationally complete} if and only if each company generates commercial revenue in every country above. We additionally consider Bermuda, but regard it as a tax haven only, of negligible economic significance in the simulation.

\subsection{Filing a tax return}

All transactions generated due to commercial activity or contractual agreements are taxed in a tax return in each country. Paying taxes is in principle a straightforward activity that requires multiplying the \textit{tax base} with the \textit{tax rate}. However, ambiguities are often contained in determining the exact value of the tax base and in choosing the right tax rate. This means that netto profits obtained by a company can be defined partially in a rigid way, while leaving some room for legal interpretation and structure of the legal argument for the tax return. The part subject to interpretation typically depends on deductibles and tax exemptions that may or may not apply for the transaction. Each tax reduction must be made in reference to a specific tax rule and applies if legal conditions are met, expressing it in a two-step process as: 
\begin{small}
\begin{minted}{python}
Input: ReductionRef, State, Base, Tax
Apply Reduction if Condition is True
TO_PAY = NewBase*NewTax
\end{minted}
\end{small}
If more than one reduction applies, we assume a rational choice of selecting the highest reduction. Modularity of tax reductions allows for a straightforward dynamic extension by the user. For example, the user can select a tax rule from the corpus, interpret it, and use the language interface to formalize it, as illustrated in Figure~\ref{fig:deduct}.

\begin{figure}
    \centering
    \includegraphics[width=0.75\linewidth]{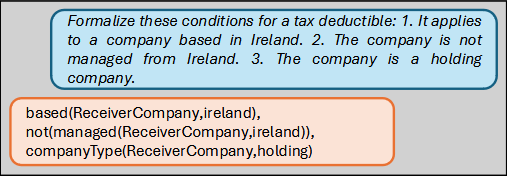}
    \caption{Illustration of introducing a new tax reduction rule by the user. The probability of obtaining the correct response can be found in Figure~\ref{fig:llmStats} for various language models.}
    \label{fig:deduct}
\end{figure}

\subsection{Utility function}

The process of incorporation together with the tax return allows us to define a utility function of the corporate structure. Let us denote $\Omega$ to be the set of trajectories, and ${\gamma\langle s_0,s \rangle \in \Omega}$ to be a trajectory from state $s_0$ to state $s$. For each trajectory ${\gamma \in  \Omega}$, we express the utility as:
\begin{equation}\label{eq:util}
    u(\gamma\langle s_0,s \rangle) = p(\gamma\langle s_0,s \rangle) - \phi(\gamma\langle s_0,s \rangle)
\end{equation}
where $\phi > 0$ denotes the incorporation costs, and $p$ are (netto) profits generated by the corporate structure after paying taxes, as described in the previous subsection. 

\section{Revealing tax loopholes}\label{sec:simulation}

Tax avoidance schemes typically involve careful planning that can utilize several tax rules to create an advantage. In order to identify legal loopholes, it is not sufficient to discover only the optimal sequence of steps. Rather, one needs to investigate tax planning holistically, and analyze suspicious frequencies of reference to tax rules. 

\subsection{Search space exploration}

Search space exploration and planning is a traditional area of AI and optimization, with several, both classical and state-of-the-art, algorithms developed. Algorithm selection can be highly dependent on the topology of the state transition system in question and on the intended goals. Our goal is to obtain a sample that includes both high utility trajectories, as well as trajectories of average utility, such that they can be contrasted. Since the main contribution of this study is not placed on specialized exploration algorithms, we consider algorithms that are general, simple, and align well with the design of our Prolog-Python environment.\footnote{For discussion on interface overhead, see \url{https://github.com/fratric/Rules2Lab}.} We chose randomized version of the \textit{best-first search} with state selection heuristic defined by:
\begin{equation}
\begin{split}
    P(s;t) &= \alpha_t P_{depth}(s;t) + (1 - \alpha_t) P_{utility}(s;t) \\
    P_{depth}(s;t) &\sim \exp{(-\frac{d_{G,t}(s,s_0)}{\tau_0})} \\ 
    P_{utility}(s;t) &\sim \exp{(u_t(s))}
\end{split}
\end{equation}
where ${P_{depth}(s;t)}$ is the probability distribution over expanded states, that inversely depends on depth of the state $s$ calculated as shortest-path distance $d_G$ from the initial state $s_0$ in the constructed search tree at step $t$, and ${P_{utility}(s;t)}$ is a probability distribution over expanded states at step $t$ that is proportional to the utility of the path ${\langle s_0,s \rangle}$. The time decay parameter, controlling sampling bias between ${P_{depth}}$ and ${P_{utility}}$, is defined as inverse time decay by the equation: ${\alpha_t = \frac{1}{1 + \beta t}}$ with the rate of decay $\beta$, that allows for control over the trade-off between exploration and exploitation. We set the rate to $\beta = 0.01$ with $t \in [0,50]$, expanding $1000$ states per iteration. Lastly, the \textit{termination criterion} is defined by the maximum amount of steps reached or by zero allowed actions available. 

\subsection{Legal summary statistics}


The dataset of sampled trajectories allows us to analyze quantities relevant for tax-loophole identification. First, one can select multi-nationally complete sequences, and order them by utility, obtaining a \textit{utility profile} of generated tax plans. Since selected sequences are economically equivalent, in the ideal case, the utility function should be (mostly) invariant with respect to the incorporation strategy and the structure of the tax return; meaning the profile should be (nearly) constant. If this is not the case, analysis can proceed by zooming on the statistical features of high-income tax plans. 

\begin{figure}[h]
    \centering
    \includegraphics[width=0.95\linewidth]{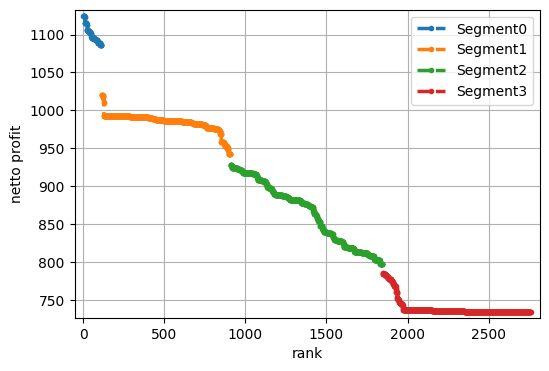}
    \caption{Segmentation of the utility profile based on the slope value. To obtain the segments, numerical differentiation of the curve was performed and peak detection algorithm with values peak size of $3.85$ and minimal distance between peaks equal to $100$ was used to identify points with high negative slope.}
    \label{fig:profile}
\end{figure}

\paragraph{Utility profile} On Figure~\ref{fig:profile}, we can observe that the utility profile of ordered multi-nationally complete incorporations is not resembling a constant function at all. Some trajectories are significantly more profitable that others, which can be attributed only to a better legal incorporation and tax strategy. Sharp steps in the profile indicate presence of different tax incorporation structures, i.e., various tax optimization schemes, which also indicates the presence of loopholes in the tax law. By numerically differentiating the curve, a peak detection algorithm can be employed to identify points that separate segments \cite{virtanen-et-all:scipy}. For a tax policy-maker, it is highly interesting to understand whether certain regularities occur among states that occupy the same segment. 

\begin{table*}[bp]
\centering
\begin{tabular}{lllll|l}
\toprule
 & segment 0 & segment 1 & segment 2 & segment 3 & type \\
\midrule
2003/49/EC & 1.670 & 3.854 & 0.962 & 4.066 & deductible \\
DCITA1969 & 2.946 & 0.211 & 0.598 & 0.099 &  \\
IRC-Sec162 & 1.223 & 1.248 & 2.246 & 1.164 & \\
\midrule
2003/49/EC & 1.670 & 3.854 & 0.962 & 4.066 & tax exemption \\
A8cNLctl1969 & 2.946 & 0.211 & 0.598 & 0.099 & \\
USA-wte & 1.223 & 1.248 & 2.246 & 1.164 & \\
check-the-box & 6.161 & 5.241 & 5.129 & 5.296 & \\
s23-IincpA & 3.982 & 0.325 & 2.563 & 1.080 & \\
\midrule
EU-inc & 0.250 & 0.332 & 0.271 & 0.333 & action \\
GB-inc & 0.177 & 0.088 & 0.142 & 0.091 &  \\
USA-inc & 0.000 & 0.000 & 0.000 & 0.000 &  \\
BMU-inc & 0.024 & 0.031 & 0.038 & 0.029 &  \\
license & 0.436 & 0.436 & 0.437 & 0.434 &  \\
transfer & 0.113 & 0.113 & 0.113 & 0.113 &  \\
\bottomrule
\end{tabular}
\caption{Frequency of paragraphs applied for each segment in the tax return for tax reductions, and paragraphs of corporate laws applied during the incorporation process. For tax reductions, we measure the amount of legal references applied per instance in a given segment. For corporate actions, we measure the frequency of applied references normalized by the amount of actions for each element of the segment. }
\label{tab:stats}
\end{table*}

\paragraph{Deductions and Tax Exemptions} Knowledge about utility segments allows us to ask whether certain tax base deductions or tax reductions were applied disproportionately often, pointing in the direction of potential loopholes in our legal system. In Table~\ref{tab:stats}, we can observe that certain laws were indeed exploited to reduce taxes. For example, the most profitable segment relies on two provisions of the Dutch tax law with reference codes DCITA1969 and A8cNLctl1969, that describe relationships between the Kingdom of Netherlands and certain low tax jurisdictions, such as Bermuda. This means that agreements between the Netherlands and offshore low tax jurisdictions, known as the \textit{Dutch sandwich}, are identifiable in the generated knowledge.

\paragraph{Incorporation process} Another useful piece of knowledge that can be extracted from the simulation results is the information about incorporation in various countries and tax jurisdictions. In order to do this, one needs to analyze trajectories without noise generated due to actions that were not necessary for the tax plan.
\begin{definition}\label{def:canonic}
    We say that a plan ${\gamma\langle s_0,s \rangle}$ is \textbf{canonical}, if for any ${\gamma' \neq \gamma}$ it holds that ${u(\gamma\langle s_0,s \rangle) > u(\gamma'\langle s_0,s \rangle)}$
\end{definition}
In the following theorem, we define conditions under which canonical representation exists. To do this, we introduce a notion of \textit{path independence}, identifying it with the equality ${p(\gamma\langle s_0,s \rangle) = p(s)}$. This means that the incorporation profits are evaluated only with respect to the final state of incorporation $s$.
\begin{theorem}
    Let the utility $u$ be defined as in equation (\ref{eq:util}), and let ${\phi(\gamma\langle s_0,s \rangle)}$ be proportional to the trajectory length ${||\gamma||}$. If ${p(\gamma\langle s_0,s \rangle)}$ is path-independent and ${||\gamma||}$ is minimal, then ${\gamma\langle s_0,s \rangle}$ is canonical.
\end{theorem}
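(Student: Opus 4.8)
The plan is to prove the contrapositive of the uniqueness requirement built into Definition~\ref{def:canonic}, namely that any minimal-length trajectory reaching $s$ strictly dominates every other trajectory reaching $s$. Fix a state $s$ reachable from $s_0$, and let $\gamma\langle s_0,s\rangle$ have minimal length $\|\gamma\|$ among all trajectories from $s_0$ to $s$. Take any other trajectory $\gamma'\langle s_0,s\rangle$ with $\gamma'\neq\gamma$. By path-independence of the profit function, $p(\gamma\langle s_0,s\rangle)=p(s)=p(\gamma'\langle s_0,s\rangle)$, so the two trajectories yield the same profit term in equation~(\ref{eq:util}); the entire comparison reduces to the incorporation-cost term $\phi$.

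The second step is to exploit the proportionality hypothesis: write $\phi(\gamma\langle s_0,s\rangle)=c\,\|\gamma\|$ for some constant $c>0$ (this is what "proportional to the trajectory length" means, and $\phi>0$ forces $c>0$). Then
\begin{equation*}
u(\gamma\langle s_0,s\rangle)-u(\gamma'\langle s_0,s\rangle)=\bigl(p(s)-c\|\gamma\|\bigr)-\bigl(p(s)-c\|\gamma'\|\bigr)=c\,(\|\gamma'\|-\|\gamma\|).
\end{equation*}
Since $\gamma$ has minimal length, $\|\gamma'\|\geq\|\gamma\|$; and since $\gamma'\neq\gamma$ while both start at $s_0$ and end at $s$, a trajectory of the same minimal length that differs from $\gamma$ must be excluded — here I would argue that two distinct trajectories between the same endpoints in this transition system cannot both be length-minimal unless one allows redundant action pairs, but any such redundancy strictly increases length, so in fact $\|\gamma'\|>\|\gamma\|$ whenever $\gamma'\neq\gamma$. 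Hence the difference is $c\,(\|\gamma'\|-\|\gamma\|)>0$, giving $u(\gamma\langle s_0,s\rangle)>u(\gamma'\langle s_0,s\rangle)$, which is exactly the canonicity condition.

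The main obstacle is the tie-breaking subtlety in the previous paragraph: a priori there could be two genuinely distinct trajectories of the same minimal length reaching $s$ (e.g.\ performing two independent \texttt{addChild} actions in either order), in which case $u$ would be equal, not strictly greater, and Definition~\ref{def:canonic} as literally stated would fail. I would handle this either by (i) reading "minimal $\|\gamma\|$" together with an implicit canonical ordering of independent actions so that the minimal-length trajectory is unique, or (ii) weakening the claimed conclusion to "$\gamma$ is canonical up to reordering of commuting actions," or (iii) observing that for the incorporation model of Section~\ref{sec:model} the relevant final states of interest are reached by an essentially unique minimal plan. I would flag which convention is intended and state it explicitly before the formal argument, since everything else in the proof is the one-line cost computation above.
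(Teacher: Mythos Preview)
Your core argument is identical to the paper's: reduce $u(\gamma)-u(\gamma')$ to $c(\|\gamma'\|-\|\gamma\|)$ via path-independence and proportionality, then invoke minimality of $\|\gamma\|$. The paper normalizes $c=1$ ``without loss of generality'' and writes the one-line chain $u(\gamma')=p(s)-\|\gamma'\|<p(s)-\|\gamma\|=u(\gamma)$, which is exactly your computation.

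Where you differ is in rigor, not method. You correctly flag that minimality alone yields only $\|\gamma'\|\geq\|\gamma\|$, so two distinct minimal-length trajectories to the same $s$ would tie on $u$ and neither would be canonical under the strict inequality of Definition~\ref{def:canonic}. The paper's proof silently assumes this away, passing from ``minimality of $\|\gamma\|$'' directly to a strict inequality without comment. Your proposed fixes (implicit canonical ordering, weakening to ``canonical up to reordering of commuting actions'', or appealing to uniqueness in the specific incorporation model) are all reasonable; the paper adopts none of them and simply does not address the issue. So your proof is the paper's proof plus a caveat the paper omits --- you have not missed anything, and the obstacle you identify is a genuine gap in the statement/definition pairing rather than in your argument.
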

\begin{proof}
Without loss of generality, let ${\phi(\gamma\langle s_0,s \rangle) = ||\gamma||}$. For any alternative path ${\gamma'\langle s_0,s \rangle}$ we can observe ${u(\gamma'\langle s_0,s \rangle) = p(s) - ||\gamma'|| < p(s) - ||\gamma|| = u(\gamma\langle s_0,s \rangle)}$ due to minimality of ${||\gamma||}$, hence ${\gamma\langle s_0,s \rangle}$ is canonical. 
\end{proof}
The importance of the result above lies not only in defining conditions under which incorporation strategies can be compared, but also reflects the reality of how humans understand the concept of tax avoidance schemes, that is, in their canonical forms.\footnote{In a way, as a normative concept \cite{field:taxonomy}.} Note, however, that noisy actions are introduced in practice to avoid detection, increasing their utility by performing camouflage (see e.g. \cite{slemrod:cheating}).

In Table~\ref{tab:stats}, one can see that in some segments, certain canonical representations overuse certain directives or laws in their tax returns. For example, the British tradition of incorporation is used in the top segment more often than in the other segments, but it can hardly be argued that this is sufficient evidence to identify the Double Irish loophole. This means the analysis on the statistical level should be followed by the structural level, which is performed in the next section.


\section{Policies: adjusting the tax design}\label{sec:policy}

The question of suitable tax policies is often answered in the literature as a task of finding a tax design function $T(z)$ for income $z$ that is maximizing the value of a social welfare function $W$, depending on presumed social and economic dynamics that generate a space of trajectories $\Omega$ for various tax designs. In this section, we demonstrate how inductive learning can be applied to increase the welfare by removing undesired tax optimization structures. 

\subsection{Optimal compliant tax design}

As discussed in Section~\ref{sec:introduction}, the classical tax design approach has a conceptual weakness: it does not account for interpretations of tax law, nor does it model its procedural aspects. This means that, in practice, the actual social welfare can be lower because of inefficient tax collection. We express this observation in the following definition:
\begin{definition}\label{def:ineff}
    A tax system is \textbf{operationally inefficient} with respect to a set $E$, if for a set of trajectories $\Omega$, where $E \subset \Omega$, it holds that 
    \begin{equation*}
        \delta(E) = W(\Omega \setminus E) - W(\Omega) > 0
    \end{equation*}
    We say the system is \textbf{operationally efficient} w.r.t. $E$ otherwise. 
\end{definition}
We can consider a set ${\{ t \in \Omega | H(t) \} }$, where $H$ is a boolean function, which we abbreviate by a slight abuse of notation as ${W(\Omega \setminus H)}$. In our case study, obtaining $H$ amounts to finding a collection of (boolean) rules that prohibit certain trajectories that are making the tax system less welfare-efficient. Defining a restriction $H$ is highly non-trivial, because the difference 
${\Delta(H) = \delta(H \cap E) - \delta(H \cap (\Omega \setminus E))}$ 
can be less than zero due to restrictions on compliant trajectories.

\subsection{Operational tax policy and induction}

The optimal selection $H$ can be modeled within the simulation environment as an induction problem. More specifically, an inductive-logic problem defined by the input tuple ${(E^+, E^-, B)}$, where $B$ is background knowledge and $E^+, E^-$ are disjoint sets of positive and negative examples, respectively. In our case, $E^+$ denotes instances of tax avoidance and $E^-$ of compliant behavior. The background knowledge is defined using $\Omega$. Positive and negative examples can be defined by thresholding the amount of netto profits within the environment, i.e., ${E^{+} = \{ \gamma \in \Omega | u(\gamma) > u_{+} \}}$ and ${E^{-} = \Omega \setminus E^{+}}$. While this provides us with an explicit rule defining which instances of behavior are non-compliant, a full access to sheltered income is, by definition, not available in practice. Therefore, in order to improve tax law, structural information about tax loopholes needs to be extracted. 

The optimal hypothesis $H$ in the ILP framework is defined by the requirements of \textit{consistency} and \textit{completeness}, meaning that ${\forall e \in E^+: H \cup B \models e}$ and ${\forall e \in E^-: H \cup B \nvDash  e}$, respectively.\footnote{For a complete formal treatment, see \cite{cropper-et-all:learning}.} We simplify the notation by writing $H \cap E^+$ for the former case and $H \cap E^-$ in the latter case, interpreting $H$ as a set defined by boolean rules that intersect sets $E^+$ and $E^-$ defined by element listing. 
The following theorem links the optimal solution of the inductive programming problem to operational efficiency. 
\begin{theorem}\label{th:ind}
     Let the induction problem ${(E^+, E^-, B)}$ be defined using the set $\Omega$ of operationally inefficient tax system with respect to $E^+$. If the optimal solution $H^*$ of the induction problem exists, then ${\Delta(H^*) > 0}$. 
\end{theorem}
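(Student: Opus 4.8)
The plan is to unwind the definitions until $\Delta(H^*)$ collapses to $\delta(E^+)$, which is positive by the operational-inefficiency hypothesis. First I would recall what it means for $H^*$ to be a solution of the induction instance $(E^+, E^-, B)$: by the stated requirements of completeness and consistency, $H^* \cup B \models e$ for every $e \in E^+$ and $H^* \cup B \nvDash e$ for every $e \in E^-$. Under the paper's identification of a hypothesis with the subset of $\Omega$ it entails together with $B$, namely $\{\, t \in \Omega \mid H^* \cup B \models t \,\}$, completeness reads $E^+ \subseteq H^*$, i.e. $H^* \cap E^+ = E^+$, and consistency reads $H^* \cap E^- = \emptyset$.

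Next I would instantiate the definition of $\Delta$. Because the induction problem is built on the set $\Omega$ that is operationally inefficient with respect to $E^+$, the relevant set $E$ in Definition~\ref{def:ineff} (and hence in $\Delta$) is $E^+$, and since $E^- = \Omega \setminus E^+$ we have $\Omega \setminus E = E^-$. Substituting the two set identities from the previous step gives
\begin{equation*}
\Delta(H^*) = \delta(H^* \cap E^+) - \delta(H^* \cap E^-) = \delta(E^+) - \delta(\emptyset).
\end{equation*}
It then remains to evaluate the two terms: a one-line computation shows $\delta(\emptyset) = W(\Omega \setminus \emptyset) - W(\Omega) = W(\Omega) - W(\Omega) = 0$, while operational inefficiency of the system with respect to $E^+$ is exactly the statement $\delta(E^+) = W(\Omega \setminus E^+) - W(\Omega) > 0$. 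Combining, $\Delta(H^*) = \delta(E^+) > 0$, as claimed.

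The only real subtlety — and the step I would treat most carefully — is the passage from the logical notions of completeness and consistency to the set-theoretic equalities $H^* \cap E^+ = E^+$ and $H^* \cap E^- = \emptyset$; this is where the paper's ``slight abuse of notation'' does all the work, and one should state explicitly that $H^*$ is being read through the entailment set above so that intersecting it with the partition $\{E^+, E^-\}$ of $\Omega$ is well defined. Everything downstream is bookkeeping. A secondary point worth a sentence is that the theorem assumes an \emph{exact} optimal $H^*$, so no approximation slack enters; if one only had an approximately consistent/complete hypothesis, $H^* \cap E^+$ would be a proper subset of $E^+$ and $H^* \cap E^-$ possibly nonempty, and one would then need a monotonicity property of $\delta$ in its argument (equivalently of $W$ under shrinking $\Omega$) to recover the inequality — which I would flag as the natural place where the clean statement could fail.
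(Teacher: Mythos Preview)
Your proof is correct and follows the same route as the paper's: use completeness and consistency of $H^*$ to obtain $H^*\cap E^+ = E^+$ and $H^*\cap E^- = \varnothing$, reduce $\Delta(H^*)$ to $\delta(E^+)-\delta(\varnothing)$, note $\delta(\varnothing)=0$, and invoke operational inefficiency for $\delta(E^+)>0$. Your added remarks on making the ``abuse of notation'' explicit and on what would break for an approximate $H^*$ are useful elaborations, but the core argument is identical.
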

\begin{proof}
    For any $H$, ${\Delta(H) = \delta(H \cap E^+) - \delta(H \cap E^-)}$. By consistency of $H^*$, we have ${H^* \cap E^- = \varnothing}$, thus for $H = H^*$ it holds that ${\delta(H^* \cap E^-) = \delta(\varnothing) = 0}$. By completeness of $H^*$, we have ${H^* \cap E^+ = E^+}$, therefore ${\delta(H^* \cap E^+) = \delta(E^+)}$ with ${\delta(E^+) > 0}$ due to operational inefficiency w.r.t $E^+$. This means $\Delta(H^*) > 0$.
\end{proof}

Theorem \ref{th:ind} states an intuitive and powerful guarantee of how social welfare can be increased by identifying strategies of tax payers that underpay, and this statement is made independently of how we define the social welfare function or the tax design. We discuss its limitations in Section~\ref{sec:discussion}.

\subsection{Suboptimal solution}

Finding the optimal hypothesis might be computationally intractable, as in realistic cases it would require to search through a large hypothesis space. Moreover, the optimal hypothesis might not exist due to poor selection of the profit threshold, which means all strategies contained in one segment might not have a uniquely identifiable common structural property. For this reason, one needs to select an induction algorithm that is robust to noise and can provide sub-optimal inductive explanations. We choose \texttt{PyGol}, an abductive/inductive logic programming tool based on meta inverse entailment that combines efficiency and robustness \cite{varghese-et-all:pygol}. Focusing only on multi-nationally complete trajectories, we define the head of our rule as \prolog{taxScheme(A,B,C,D)} where \prolog{A} is a company based in Ireland, \prolog{B} in the Netherlands, \prolog{C} in the USA, and \prolog{D} in Germany. Since we have predefined what variable corresponds to operational base in each respective country, we restrict the background knowledge only to literals \prolog{ownsIP}, \prolog{managed}, and \prolog{rentsIP}, which provides essential information about the sub-licensing structure synthesized among the companies. We run the \texttt{PyGol} algorithm for a maximum of $7$ literals. Obtained solution $H$ has accuracy $0.981$, precision $0.833$, specificity $0.670$, sensitivity $0.994$, and F1-score $0.743$. Learned rules of the hypothesis $H$ might not be easily interpretable for a legal expert, because the information about the tax avoidance scheme might be lost in the output complexity. To make the result easier to understand, the terms used for each rule of $H$ can be weighted by their individual F1-score and plotted as a knowledge graph, as depicted on Figure~\ref{fig:scheme}. This knowledge graph extracts the operational essence of the Double Irish Dutch Scheme.

\begin{figure}
    \centering
    \includegraphics[width=1\linewidth]{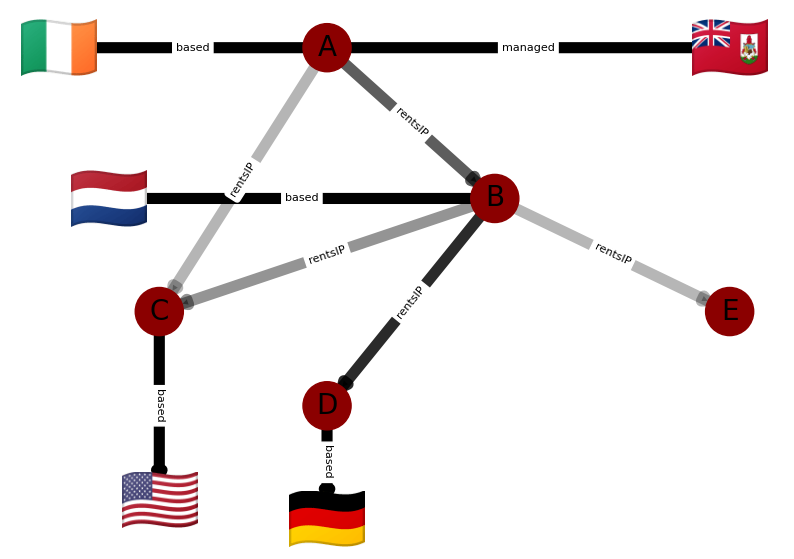}
    \caption{Visualization of first $10$ edges of hypothesis $H$ weighted by F1-score. Ireland based company A is managed from Bermuda, and rents IP to company B based in the Netherlands, which rents IP further to companies E, D, and C. This outlines the essence of the Double Irish with the Dutch Sandwich scheme.}
    \label{fig:scheme}
\end{figure}

\section{Discussion and future work}\label{sec:discussion}

\paragraph{Answering RQ1} We have demonstrated that, in principle, it is possible to conduct tax planning even in the very advanced case of Example~\ref{ex:main}. In order to make tax planning broadly available even for people with limited resources, the main bottleneck is the formalization of legal rules using language models, because even larger models do not perform sufficiently well. Leveraging a corpus of human-formalized rules and employing fine-tuning methods can be considered as a potential solution.

\paragraph{Answering RQ2} Using Example~\ref{ex:main} as a case study, we have successfully demonstrated that tax loopholes can be exposed. If our approach is employed by government or non-government bodies, with a sufficient amount of computing power, then every new law can be publicly evaluated. Since our approach allows for freedom of interpretation, various formalizations can be added into the knowledge base, simulated, and compressed results stored in a public database. Developing a minimal architecture that allows for a large scale open-source adaptation of our prototype that can be used by public institutions is our main future goal.

\paragraph{Answering RQ3} We have demonstrated how a statistical and structural description of a tax avoidance scheme can be made using our prototype system. We believe our system can act as a helpful policy assistant, providing essential information about possible tax loopholes. Theorem \ref{th:ind} provides theoretical guarantees supported by practical methods on how to improve social welfare. Still, we argue that caution should be taken when formulating operational restrictions. Firstly, the solution $H^*$ is obtained from a finite sample of trajectories drawn from one particular legal interpretation, which may introduce computational and modeling errors. Second, due to these errors, one should not aim to restrict all behaviors that can be regarded as non-compliant, but focus only on a small subset of the most severe instances while keeping a level of acceptable low-income non-compliance. We do not regard these inaccuracies as design flaws, but rather as defining aspects of legal decision-making. Lastly, our solution did not propose changes to specific laws, as we believe the final decision should be left for the human policy-maker. Nonetheless, extending our methodology by including a language model capable of processing the simulation results and proposing legal reformulations is among our long-term goals. 

\section{Conclusions}\label{sec:conclusion}

This study holds a powerful promise for the legal community, as well as for the policy-makers and general public. While certain components of our prototype appear to be the correct design choice, the issue of efficient, dynamic, and user-friendly formalization of legal rules persists as the greatest challenge. Extending our prototype with higher numbers of formalized rules and running more scenarios has the potential to result in publicly exposing unknown tax avoidance schemes for the benefit of society.




\bibliographystyle{unsrtnat}
\bibliography{references}  






\end{document}